\newtheorem{theorem}{Theorem}[section]
\newtheorem{lemma}[theorem]{Lemma}
\def\E{\mathds{E}\,}
\def\Var{\mathds{V}ar}
\def\Pr{\mathds{P}}
\newcommand{\ignore}[1]{ }
\renewcommand{\le}{\leqslant}
\renewcommand{\ge}{\geqslant}
\newcommand{\brac}[1]{\left( #1 \right)}
\newcommand\bfrac[2]{\left(\frac{#1}{#2}\right)}
\newcommand\rai{\rightarrow \infty}
\def\b{\beta}
\def\d{\delta}
\def\e{\varepsilon}
\def\g{\gamma}
\def\th{\theta}
\def\l{\lambda}
\def\om{\omega}
\begin{document}

\author{Colin Cooper$^*$
\and Nan Kang$^*$\and Tomasz Radzik\thanks{Department of  Informatics, King's College London,  UK.}}
\title{A simple model of influence}
\maketitle

\begin{abstract}
We propose a simple model of influence in a network, based on edge density. In the model vertices (people) follow the opinion of the  group they belong to.  The opinion percolates  down from an active  vertex, the influencer, at the head of the group. Groups can merge, based on
interactions between influencers (i.e.,
interactions along `active edges' of the network),
so that the number of opinions is reduced. Eventually no active edges remain, and the groups and their opinions become static.

Our analysis is for $G(n,m)$ as $m$ increases from zero to $N={n \choose 2}$.
Initially every vertex is active, and finally $G$ is a clique, and with only one active vertex.
For $m\le N/\om$,
where $\om = \om(n)$ grows to infinity, but arbitrarily slowly,
we prove that the number of active vertices $a(m)$ is concentrated and we give w.h.p.\ results for this quantity. For larger values of $m$ our results give an upper bound on $\E a(m)$.

We  make an equivalent analysis for the same network when there are two types of influencers. Independent ones as described above, and stubborn vertices (dictators) who accept followers, but never follow. This leads to a reduction in the number of independent influencers as the network density increases. In the deterministic approximation (obtained by solving the deterministic recurrence corresponding to the formula for the expected change in one step), when
$m=cN$,
a single stubborn vertex reduces the number of influencers  by a factor of $\sqrt{1-c}$, i.e., from $a(m)$ to $(\sqrt{1-c})\,a(m)$. If the number of stubborn vertices tends to infinity slowly with $n$, then no independent influencers remain, even if $m=N/\om$.

Finally we analyse the size of the largest influence group which is of  order $(n/k) \log k$ when there are $k$ active vertices, and remark  that in the limit the size distribution of groups is equivalent to a continuous stick breaking process.

\end{abstract}


\section{Introduction}
We propose a simple model of influence in a network, based on edge density. In the model vertices (people) follow the opinion of the  group they belong to.  This opinion percolates  down from an active (or opinionated) vertex, the influencer, at the head of the group. Groups can merge, based on edges between influencers (active edges), so that the number of opinions is reduced. Eventually no active edges remain and the groups and their opinions become static.

The sociologist Robert Axelrod \cite{Axelrod} posed the question ``If people tend to become more alike in their beliefs, attitudes, and behavior when they interact, why do not all such differences eventually disappear?''.
This question was further studied by, for example, Flache et al. \cite{Fetc}, Moussaid et. al \cite{Mouss} who review various models of social interaction, generally based on some form of agency.

In our model the emergence of separate groups occurs naturally due to lack of active edges between influencers.  The exact composition of the groups and their influencing opinion being a stochastic outcome of the connectivity of individual vertices.

\paragraph{Joining Protocol.}

The process models how networks can partition into disjoint subgraphs which we call {\em fragments} based on following the opinion of a neighbour.  At any step, a fragment consists of a directed tree rooted at an active vertex (the influencer), edges pointing from follower vertices towards the root. This forms a simple model of influence where the vertices in a fragment follow the opinion of the vertex they point to, and hence that of the active root.

The process is carried out on a fixed underlying graph $G=(V,E)$. The basic u.a.r. process is as follows.
\begin{enumerate}
	\item Vertices are either active or passive. Initially all vertices of $V$ are active, and all fragments are individual vertices.
	\item
 \begin{enumerate}
     \item
     \emph{Vertex model}:
 An active vertex $u$ is chosen u.a.r. and contacts a random active neighbour~$v$.
 \item
 \emph{Edge model}:
 A directed edge $(v,u)$ between active vertices is chosen u.a.r. and the active vertex $u$ contacts its active neighbour~$v$.
 (Equivalently, an undirected edge is chosen u.a.r.
 and random of the two vertices contacts the other
 vertex).
 \end{enumerate}
	\item The contacted neighbour $v$ becomes passive.
	\item Vertex $v$ directs an edge to $u$ in the fragment graph. Vertex $v$ and its fragment $F(v)$ become part of the fragment $F(u)$ rooted at $u$.
	\item An active vertex is isolated if it has no edges to active neighbours in $G$.
The process ends when all active vertices are isolated.
\end{enumerate}
\paragraph{Summary of results.}
As an illustration of the process,
in this paper we make an analysis of the edge model for random graphs $G(n,m)$, providing the following results.
 \begin{itemize}
 \item Theorem \ref{thm:uar_Gnp_comp1} gives the w.h.p. number of fragments in $G(n,m)$ for $m \ll {n \choose 2}$, and an upper bound on the expected number for any $m$.  The results are supported by simulations
 which indicate that the upper bound in Theorem~\ref{thm:uar_Gnp_comp1} is the correct answer.
 \item Theorem \ref{Dicks}  gives the equivalent number of fragments in the presence of  stubborn vertices  (vertices who accept followers, but refuse to follow).
 \item The tail distribution of size of the largest fragment and its expected size are given in  Lemma \ref{lemma:spacings}.
  \end{itemize}

\section{Analysis for random graphs $G(n,m)$}

We suppose the underlying graph is a random graph $G(n,m)$\, and at each step, the absorbing vertex and the contacted neighbour are chosen by selecting uniformly at random an edge between two
active vertices (the edge model).

We work with a random permutation $\sigma$ of the edges of the complete graph, where $N={n \choose 2}$.
The edges of $\sigma$ are inspected in the permutation order. By revealing the first $m$ edges in the random permutation we choose a random graph $G(n,m)$. The order in which we reveal these
first $m$ edges and and their random directions
give a random execution of the joining protocol
on the chosen graph.

The u.a.r. process is equivalent to picking a random edge between active vertices
(by skipping the steps when one or both vertices of the chosen edge is not active).
One endpoint stays active and the other becomes passive. It doesn't matter which (since we are
interested in the number and sizes of fragments, but
not in their structure).
As none of the edges between active vertices have been inspected, the next edge is equally likely to be between any of them.

Let $A(m)$ be the set of active vertices obtained by running the process on $G(n,m)$.
Let $a(m)=|A(m)|$ be the number of active vertices after $m$ edges are exposed.

The following deterministic recurrence plays a central part in our analysis,
\begin{equation}\label{**}
a_{t+1}= a_t - \frac{a_t(a_t-1)}{2(N-t)}.
\end{equation}
We will show that if $t=o(N)$ then $\E a(t) \sim a_t$, and that $\E a(t) \le a_t$ always.
The solution to \eqref{**} is given in  the next lemma.
To maintain continuity of presentation,
the proof of the lemma is deferred to
the next section.

In what follows $\om$ is a generic variable which tends to infinity with $n$, but can do so arbitrarily slowly.

\begin{lemma} \label{DetLem}
$\;$
\begin{enumerate}[label=(\roman*)]
\item
For $t\le N/\om$, we have $a_t=c_t (1+\e_t)$ for $\e_t=O(t/(N-t))$ and
$c_t$ given by
 \begin{equation} \label{sparse}
  c_t = \frac{n^2}{n+t}.
\end{equation}
Thus if $t=o(N)$,
$a_t \sim c_t$.\vspace*{1ex}
\item
For $t \le N - \om$,
we have $a_t=b_t (1+\e_t)$ for $\e_t=O(1/\om)$ and
$b_t$ given by
\begin{equation}\label{denser}
b_t=\frac{1}{1-(1-1/n)\sqrt{1-t/N}}.
\end{equation}
\end{enumerate}
\end{lemma}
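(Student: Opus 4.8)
The whole argument will rest on one exact identity. Clearing denominators in~\eqref{**} gives $\tfrac1{a_{t+1}}=\tfrac{2(N-t)}{a_t\,(2(N-t)+1-a_t)}$, and subtracting $\tfrac1{a_t}$ — equivalently, computing the ratio $v_{t+1}/v_t$ for $v_t:=1-1/a_t$ — collapses after cancellation to the clean recursion $v_{t+1}=v_t\bigl(1-\tfrac1{2(N-t)+1-a_t}\bigr)$, hence the product formula
\[
v_t=\Bigl(1-\tfrac1n\Bigr)\prod_{s=0}^{t-1}(1-\mu_s),\qquad \mu_s:=\frac1{2(N-s)+1-a_s}.
\]
I would first record two a priori facts. (1) For $t\le N-\om$ one has $1\le a_{t+1}\le a_t\le n$, by induction on $t$: the map $a\mapsto a-\tfrac{a(a-1)}{2(N-t)}$ sends $[1,2(N-t)]$ into $[1,2(N-t)-2]$ (its maximum there is $\tfrac12(N-t)+O(1)$), so the bound $a_t\le 2(N-t)$ persists and forces $a_{t+1}\ge1$. (2) Since $a_s\ge1$ gives $\mu_s\ge\tfrac1{2(N-s)}\ge\tfrac1{2N}$, we get $\prod_{s<t}(1-\mu_s)\le e^{-t/(2N)}$, whence the crude bound $a_t\le 4N/t$ from $\tfrac1{a_t}=1-(1-\tfrac1n)\prod_{s<t}(1-\mu_s)\ge 1-e^{-t/(2N)}\ge t/(4N)$. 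Thus $a_s\le\min(n,4N/s)$; this is the only input needed to control the product.

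For part~(i): writing $P_t=\prod_{s<t}(1-\mu_s)$ and using $n(n-1)=2N$, one checks in a line that $\tfrac1{a_t}-\tfrac1{c_t}=\tfrac{n-1}{n}\bigl((1-\tfrac t{2N})-P_t\bigr)$ with $c_t$ as in~\eqref{sparse}, so it is enough to show $P_t=(1-\tfrac t{2N})+O\!\bigl(\tfrac{t(n+t)}{n^2N}\bigr)$. I would expand $\ln(1-x)=-x+O(x^2)$ and $\mu_s=\tfrac1{2(N-s)}+O\!\bigl(\tfrac{a_s}{(N-s)^2}\bigr)$; because $t\le N/\om$ forces $N-s\ge N/2$ for every $s<t$, every error term is a multiple of $\sum_{s<t}\tfrac{a_s}{N^2}$, $\sum_{s<t}\tfrac{s}{N^2}$ or $\sum_{s<t}\tfrac{a_s^2}{N^3}$, and each of these is $O\!\bigl(\tfrac{t(n+t)}{n^2N}\bigr)$ once the sums are split at $s\approx n$ and $a_s\le\min(n,4N/s)$ is used. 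Exponentiating, together with $e^{-t/(2N)}=1-\tfrac t{2N}+O(t^2/N^2)$, gives the claim; dividing by $\tfrac1{c_t}=\tfrac{n+t}{n^2}$ then yields $a_t=c_t(1+O(t/N))=c_t(1+\e_t)$ with $\e_t=O(t/(N-t))$, and $a_t\sim c_t$ when $t=o(N)$ is immediate.

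For part~(ii): compare $P_t$ with the reference product $\overline P_t:=\prod_{s<t}\bigl(1-\tfrac1{2(N-s)}\bigr)$, for which a standard harmonic-sum estimate yields $\overline P_t=\sqrt{1-t/N}\,(1+O(1/(N-t)))$ — this is exactly the estimate that produces the factor $\sqrt{1-t/N}$ and hence~\eqref{denser}. Since $\bigl|\mu_s-\tfrac1{2(N-s)}\bigr|\le\tfrac{a_s}{2(N-s)^2}$, we have $\bigl|\ln(P_t/\overline P_t)\bigr|\le\sum_{s<t}\tfrac{a_s}{(N-s)^2}=O(1/\om)$ (split at $s=N/2$: $a_s\le n$ below, $a_s\le 8$ above). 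Hence $v_t=(1-\tfrac1n)\sqrt{1-t/N}\,(1+\eta_t)$ with $\eta_t=O(1/\om)$, and writing $\bar v_t:=(1-\tfrac1n)\sqrt{1-t/N}$, $b_t:=1/(1-\bar v_t)$ and using $b_t\bar v_t=b_t-1$, one gets $a_t=\tfrac1{1-v_t}=\tfrac{b_t}{1-(b_t-1)\eta_t}$.

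The one genuinely delicate point — and where I expect to spend the real effort — is this last inversion when $t$ is small: there $a_t\asymp b_t\asymp n$, so a relative error $O(1/\om)$ in $v_t$ is \emph{not} enough by itself; what is needed is $(b_t-1)\eta_t=O(1/\om)$, not merely $\eta_t=O(1/\om)$. Here the crude bounds pay off: $b_t-1\le\min(n,2N/t)$, and both $(b_t-1)/(N-t)$ and $(b_t-1)\sum_{s<t}\tfrac{a_s}{(N-s)^2}$ are $O(1/\om)$ — once more by splitting the $s$-sum at $N/2$ and using $a_s\le\min(n,4N/s)$, the arithmetical crux being that $y\mapsto\tfrac{\ln y}{y}$ is bounded, so the range $n\le s\le N$ contributes only $O(1/n)$. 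Granting this, $(b_t-1)\eta_t=O(1/\om)$ uniformly over $t\le N-\om$, so $a_t=b_t(1+O(1/\om))$, which is~\eqref{denser}. (It remains only to supply the deferred harmonic-sum estimate for $\overline P_t$, which is routine.)
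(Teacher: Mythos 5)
Your proof takes a genuinely different route from the paper's. The key move is the exact identity: from the recurrence one gets, with $v_t = 1-1/a_t$, the clean multiplicative recursion $v_{t+1}=v_t\bigl(1-\tfrac{1}{2(N-t)+1-a_t}\bigr)$ and hence the closed product $v_t=(1-1/n)\prod_{s<t}(1-\mu_s)$. This is correct (I checked: the numerator $aM+a-a^2-M$ factors as $(a-1)(M-a)$, which is what makes the cancellation work) and the paper does not exploit it. The paper instead verifies by induction that $a_t/c_t = 1+\g t/(N-t)$ with $\g\in[-2,0]$ for part (i), and for part (ii) derives a step-by-step estimate on $b_t/b_{t+1}$ and sums the per-step errors, finally integrating $\int(\l^2+\l/(N-t))$. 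Your product representation turns the whole problem into a single estimate on $\sum_{s<t}(\mu_s-\tfrac{1}{2(N-s)})$, which is structurally cleaner and isolates where $\sqrt{1-t/N}$ comes from (the harmonic sum $\sum 1/(2(N-s))$). You also correctly flag the genuine subtlety that the target error for $a_t=1/(1-v_t)$ is a \emph{relative} error $O(1/\om)$ on $a_t$, which requires $(b_t-1)\eta_t=O(1/\om)$ rather than merely $\eta_t=O(1/\om)$; the paper handles this implicitly through its $\int F(t)\,dt$ computation.

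One precision issue worth fixing: the harmonic-sum estimate $\overline P_t=\sqrt{1-t/N}\,(1+O(1/(N-t)))$ is not quite sharp enough for your own inversion step. With this bound, the contribution to $(b_t-1)\eta_t$ is $(b_t-1)\cdot O(1/(N-t))$, and for $t=o(N)$ this is $\approx n/N = \Theta(1/n)$, not $O(1/\om)$ unless $\om=O(n)$ is imposed. The fix is that the $1/(N-t)$ and $1/N$ boundary terms in $H_N-H_{N-t}$ nearly cancel: one actually has $\overline P_t=\sqrt{1-t/N}\bigl(1+O\bigl(\tfrac{t}{N(N-t)}\bigr)\bigr)$, and then $(b_t-1)\cdot\tfrac{t}{N(N-t)}\le\tfrac{2N}{t}\cdot\tfrac{t}{N(N-t)}=\tfrac{2}{N-t}=O(1/\om)$ uniformly. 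With this sharpening, and your (correct) observation that $(\ln y)/y$ is bounded absorbing the $\log$ in $(b_t-1)\sum_{s\ge n}a_s/(N-s)^2$, the argument goes through for the full stated range. (The paper's own proof has a comparable implicit restriction in its final integral bound, so this is not a defect specific to your approach, but since your method makes the error source so transparent it is worth stating the sharp constant.)
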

Our first result follows from this lemma.

\begin{theorem}\label{thm:uar_Gnp_comp1}
	Given a random graph $G(n,m)$\,, for $m\le N/\om$, where $\om \rai$, the number of components $a(m)$ generated by the opinion fragmentation process is  concentrated with expected value given by
\begin{equation}\label{Gnm}
\E a(m) \sim \frac{1}{1-(1-1/n)\sqrt{1-m/N}}.
\end{equation}
Moreover for any $m \le N(1-o(1))$,  $\E a(m)$ is upper bounded by the RHS of \eqref{Gnm}.

For $m\le N/\om$, the  expected number of active vertices $\E a(m)$ is well approximated by the simpler expression $\E a(m) \sim  \frac{n^2}{n+m}$ as given by \eqref{sparse}.
\end{theorem}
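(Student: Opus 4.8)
The plan is to drive everything from the exact one-step expected change of $a(t)$, which is precisely recurrence \eqref{**}, and then to identify the limiting constant by passing to the reciprocal $1/a(t)$, whose drift turns out to be \emph{deterministic}. Let $\mathcal{F}_t$ be the $\sigma$-field generated by the first $t$ edges of $\sigma$. Conditioned on $\mathcal{F}_t$, the $(t{+}1)$-st edge is uniform over the $N-t$ still-unrevealed pairs, and since no pair of two active vertices has ever been revealed, all $\binom{a(t)}{2}$ such pairs are among them; hence with probability $\binom{a(t)}{2}/(N-t)$ the new edge joins two active vertices and $a$ drops by one, and otherwise $a$ is unchanged. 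Thus $\E[a(t+1)\mid\mathcal{F}_t]=g_t(a(t))$ with $g_t(x):=x-\frac{x(x-1)}{2(N-t)}$, and $|a(t+1)-a(t)|\le 1$. Taking expectations and using $\E[a(t)^2]\ge(\E a(t))^2$ gives $\E a(t+1)\le g_t(\E a(t))$; since $g_t$ is non-decreasing on $[1,n]$ whenever $N-t\ge n$ (which holds for all $m\le N(1-o(1))$), and $\E a(0)=n=a_0$ with $a_t\in[1,n]$, induction yields $\E a(m)\le a_m$. By Lemma~\ref{DetLem}(ii), $a_m=b_m(1+o(1))$ where $b_m$ is the right-hand side of \eqref{Gnm} --- in fact $a_m\le b_m$, the discrete recurrence \eqref{**} staying below the solution \eqref{denser} of the ODE $\dot b=-b(b-1)/2(N-t)$ that it discretises, a comparison implicit in the proof of Lemma~\ref{DetLem} --- and this is the ``moreover'' clause.

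For the matching lower bound on $\E a(m)$, a one-step drop changes $1/a$ by $\frac1{a(t)-1}-\frac1{a(t)}=\frac1{a(t)(a(t)-1)}$, so $\E[\tfrac1{a(t+1)}-\tfrac1{a(t)}\mid\mathcal{F}_t]=\frac{a(t)(a(t)-1)}{2(N-t)}\cdot\frac1{a(t)(a(t)-1)}=\frac1{2(N-t)}$, a deterministic increment. Hence $\tfrac1{a(t)}-\sum_{s<t}\tfrac1{2(N-s)}$ is a martingale and $\E[1/a(m)]=\tfrac1n+\sum_{s=0}^{m-1}\tfrac1{2(N-s)}=\tfrac1n+\tfrac12\ln\tfrac{N}{N-m}+O(\tfrac1{N-m})$ \emph{exactly}. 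For $m\le N/\om$ this expands to $\E[1/a(m)]=\tfrac{n+m}{n^2}(1+o(1))=\tfrac1{c_m}(1+o(1))$, which by Lemma~\ref{DetLem} equals $\tfrac1{a_m}(1+o(1))=\tfrac1{b_m}(1+o(1))$. Jensen's inequality gives $\E a(m)\ge 1/\E[1/a(m)]=b_m(1+o(1))$, and combined with $\E a(m)\le a_m=b_m(1+o(1))$ we get $\E a(m)\sim b_m$, which is \eqref{Gnm}; the same computation gives the simpler form $\E a(m)\sim n^2/(n+m)$ of \eqref{sparse}.

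For concentration it suffices, by Chebyshev, to show $\Var(1/a(m))=o\big((\E[1/a(m)])^2\big)$. From the predictable quadratic variation of the martingale above, with $p_t:=a(t)(a(t)-1)/2(N-t)$,
\[
\Var\!\Big(\tfrac1{a(m)}\Big)=\sum_{t<m}\E\!\left[\frac{p_t(1-p_t)}{(a(t)(a(t)-1))^2}\right]\le\sum_{t<m}\frac1{2(N-t)}\,\E\!\left[\frac{1}{a(t)(a(t)-1)}\right].
\]
Since $a(\cdot)$ is non-increasing, $a(t)\ge a(m)$, so on $\{a(m)\ge 2\}$ the quadratic variation is at most $\frac{1}{a(m)(a(m)-1)}\cdot\tfrac12\ln\tfrac{N}{N-m}$ pointwise, hence at most $\tfrac2{\om}\,\E[1/a(m)^2]$ in expectation for $m\le N/\om$ (using $\tfrac12\ln\tfrac N{N-m}\le\tfrac1\om$ and $\tfrac1{a(a-1)}\le\tfrac2{a^2}$). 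Writing $\E[1/a(m)^2]=\Var(1/a(m))+(\E[1/a(m)])^2$ and solving this self-improving inequality gives $\Var(1/a(m))=O(1/\om)\,(\E[1/a(m)])^2=o((\E[1/a(m)])^2)$, whence $1/a(m)=\E[1/a(m)](1+o_p(1))$ and therefore $a(m)=\E a(m)(1+o_p(1))$.

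The one genuinely delicate point is the restriction to $\{a(m)\ge2\}$ in the last step. Near the top of the range, $m=N/\om$, the value $\E a(m)$ is only of order $\om$, so the usual additive martingale bounds (Azuma, bounded differences) --- which pin down $a(m)$ only to within $O(\sqrt n)$ --- are far too weak, and the multiplicative argument must be closed by ruling out an atypically small $a(m)$. Reaching a level $\ell$ by time $m$ forces the sojourn times at all levels between $n$ and $\ell$ to be simultaneously far below their means, which are of order $(N-m)/j^2$ at level $j$; since those sojourn times stochastically dominate independent geometric variables, a Chernoff-type estimate shows $\Pr[a(m)\le\ell]$ is tiny for every $\ell=o(\om)$, which absorbs the $\{a(m)=1\}$ contribution above. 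I expect making this last estimate uniform in how slowly $\om\to\infty$ to be the main obstacle; an alternative that avoids it is to run the whole argument as a differential-equation-method bootstrap, maintaining $a(t)=a_t(1+o(1))$ as an invariant up to a stopping time whose probability of occurring before step $m$ is itself controlled by the martingale estimates above.
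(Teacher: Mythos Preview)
Your approach is correct and genuinely different from the paper's. The paper never notices that the reciprocal $1/a(t)$ has \emph{deterministic} drift; instead it couples $\E a(t)$ to the recurrence~\eqref{**} in both directions by establishing concentration first, via Azuma--Hoeffding for $t\le n^{4/3}/\om$ and then a multi-scale bootstrap with Freedman's inequality through the sequence $t_i=n^{1+\b_i}/\om$, $\b_i=1-(2/3)^i$. Your route is cleaner: the exact identity $\E[1/a(m)]=\tfrac1n+\sum_{s<m}\tfrac1{2(N-s)}$ plus Jensen gives the lower bound on $\E a(m)$ with no concentration needed, and the upper bound is the same convexity argument as in the paper. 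The paper's machinery, on the other hand, is more portable (it would survive perturbations of the model where the reciprocal trick no longer linearises the drift).

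Your worry in the last paragraph is misplaced, and the argument actually closes without any appeal to sojourn times. On the event $\{a(m)=1\}$ the predictable quadratic variation is still bounded pointwise by $\sum_{t<\tau}\tfrac1{2(N-t)\cdot 2}\le \tfrac{m}{4(N-m)}\le\tfrac1{4(\om-1)}$, where $\tau$ is the hitting time of level~$1$, because the summands for $t\ge\tau$ vanish and for $t<\tau$ one has $a(t)\ge2$. Since $1/a(m)^2=1$ on that event, the inequality $\langle M\rangle_m\le C\om^{-1}/a(m)^2$ holds \emph{everywhere}, not just on $\{a(m)\ge2\}$, and your self-improving step $\Var(1/a(m))\le C\om^{-1}\E[1/a(m)^2]$ goes through unconditionally. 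No separate tail bound on $\Pr[a(m)\le\ell]$ is required.

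One minor point: the assertion ``in fact $a_m\le b_m$'' is not what Lemma~\ref{DetLem} proves (it gives $a_t=b_t(1+O(1/\om))$ with no sign information on the error), so for the ``moreover'' clause you should state the upper bound as $(1+o(1))b_m$ rather than $b_m$ itself, which is all the theorem claims anyway.
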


\begin{proof}
We add edges of a complete graph to an empty graph in random order, and analyse the expected change in the number of active vertices in one step. At the beginning all vertices are active and $a(0)=n$.

Let $a(t)$ be the total number of active vertices at step $t$\,. There are $N-t$ unexamined edges remaining after step $t$ as we add one edge per step, and  there are $\binom{a(t)}{2}$ many active edges left after step $t$\,. Therefore, the probability of choosing an active edge at step $t+1$ is $\binom{a(t)}{2}/(N-t)$\,, and we lose one active vertex for each  active edge added.
Thus,
	\begin{align} \label{eq:uar_Gnp_activeNodes}
		\E[a(t+1) \mid a(t)] = a(t) - \frac{\binom{a(t)}{2}}{N-t} \,.
	\end{align}
 The function $x^2$ is convex so $\E (a(t))^2 \ge (\E a(t))^2$. Thus   the solution $a_t$ of the recurrence \eqref{**}
 gives an upper bound on $\E a(t)$. 

 On the other hand, if $a(t)$
is concentrated, then $\E (a(t))^2 \sim (\E a(t))^2$ in which case $\E a(t) \sim a_t$ as in \eqref{**}. This is easy up to $t\le n^{4/3}/\om$.
Using a edge exposure martingale, the value of $a(t)$ can only change by zero or one at any step, so
\begin{equation}\label{jcfkiew0a}
\Pr(|a(t)- \E a(t)| \ge \l) \le \exp \brac{- \frac{\l^2}{2t}}.
\end{equation}
For $t\le n^{4/3}/\om$, choose $\l= \sqrt{\om t}$
to get $o(1)$ on the RHS in~\eqref{jcfkiew0a}
and $\l = n^{2/3} = o(a_t)$.
Assuming concentration of $a(t)$,  $\E (a(t))^2$ on the RHS of \eqref{eq:uar_Gnp_activeNodes} can be replaced  by
$(1+o(1)) (\E a(t))^2$. This  allows us to use recurrence \eqref{**} to analyse the recurrence  \eqref{eq:uar_Gnp_activeNodes} for $\E a(t)$.

From $t \ge n^{4/3}$ onward, mostly nothing happens at any step and the standard Azuma-Hoeffding inequality approach stops working.
As $a(t)$ is a supermartingale ($\E a(t+1) \le a(t)$), we can use Freedman's inequality, which we paraphrase from \cite{Easy}.

{\sc Freedman's Inequality \cite{Free}. \;}{\em  Suppose $Y_0,Y_1,...$ is a supermartingale such that
$Y_j-Y_{j-1} \le C$ for a positive constant $C$ and
all $j$. Let $V_m=\sum_{k \le m} \Var [(Y_{k+1}-Y_k) \mid {\cal F}_k]$. Then for $\l, b >0$
\begin{equation}\label{Freem}
\Pr(\exists m: V_m \le b \text{ and } Y_m- \E Y_m \ge \l) \le \exp\brac{-\frac{\l^2}{2(b+C\l)}}.
\end{equation}
}\\
In our case $Y_m=a(m+s)$ given the value of $a(s)$, and $C=1$ by \eqref{eq:uar_Gnp_activeNodes}.

\begin{figure}[H]
	\centering
	\includegraphics[scale=.65]{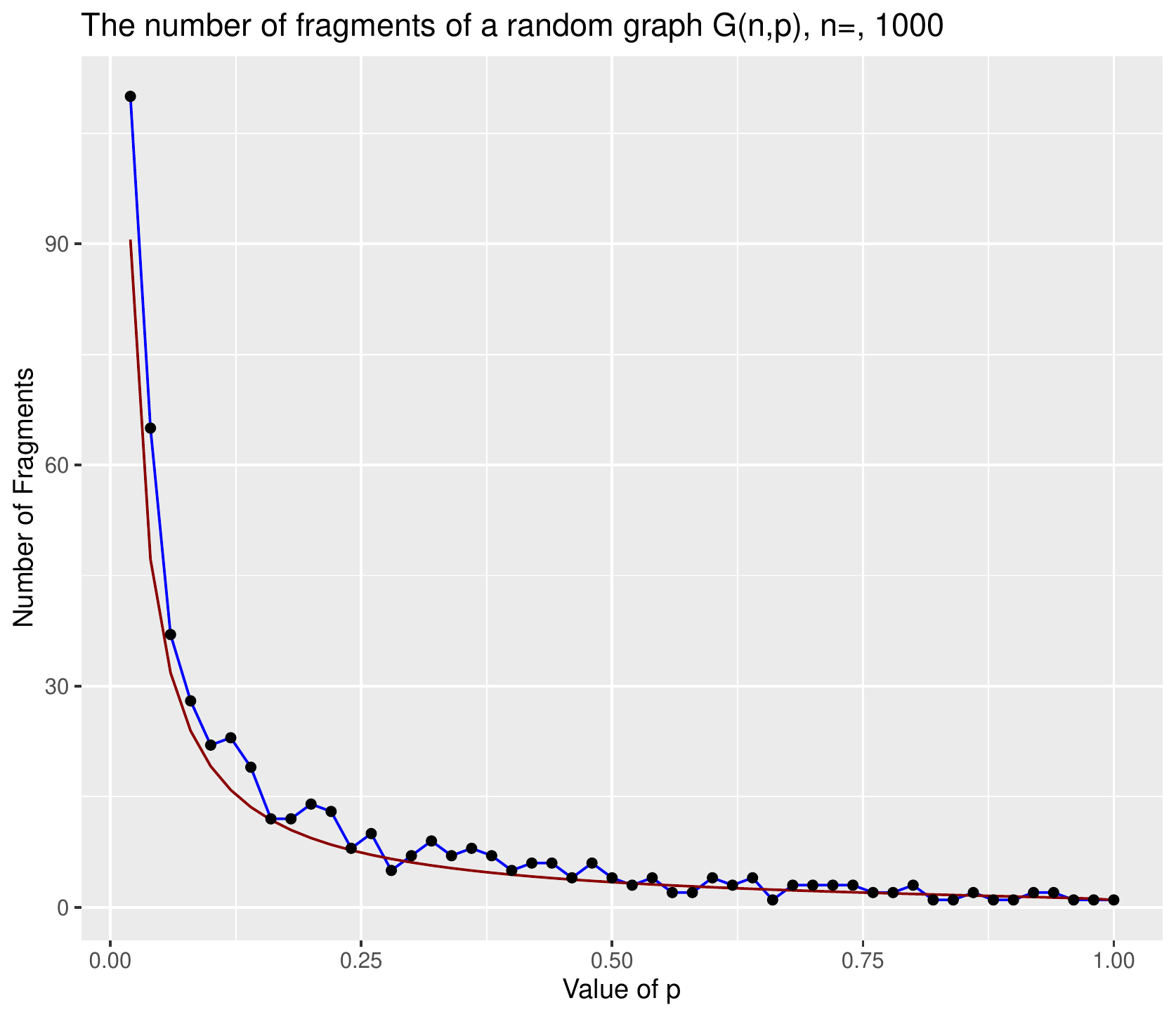}
	\caption{Simulation of the number of active vertices. The simulation is based  on $G(n,p)$. For large $m=Np$, $G(n,p) \sim G(n,m)$ so the results are equivalent.
 The blue plot is the simulation, with values of $p$ interpolated at $0.02$, the first entry being $p=0.02$. The red curve is  \eqref{denser} giving $b_m$ evaluated at $p=m/N$. } \label{Fig1}
 \end{figure}

Let $t_1=n^{4/3}/\om=n^{1+1/3}/\om$ and $t_i=n^{1+1/3+\cdots+2^{i-1}/{3^i}}/\om=n^{1+\b_i}/\om$, where
$\b_i=1-2^i/3^i$ and $\om$ may vary. The inductive assumption is that
\[
a(t_i) \sim \frac{n^2}{n+t_i} \sim \om n^{1-\b_i},
\]
see Lemma \ref{DetLem}. As $a(t)$ is monotone non-increasing it follows  for $t_i\le t\le t_{i+1}$, and $t_i=o(N)$, that
\[
\frac{{a(t) \choose 2}}{N-t} \le \frac{a(t_i)^2}{n^2} (1+o(1)) \sim \frac{\om^2 n^{2-2\b_i}}{n^2}=\frac{\om^2}{n^{2\b_i}}.
\]
As $\Var(a(t+1)-a(t))\le (1+o(1))a(t_i)^2/n^2$ we have that
\[
b_i=\sum_{t=t_i}^{t_{i+1}}\Var(a(t+1)-a(t))\le (1+o(1))\frac{t_{i+1}}{n^{2\b_i}}=\om n^{1-\b_i+2^i/3^{i+1}}.
\]
Thus using \eqref{Freem},
\begin{align*}
\Pr(a(t_{i+1} & - \E a(t_{i+1}) \ge \om^{3/4} n^{1-\b_{i+1}})\\
\le&\;
\exp\brac{-\frac 12 \om^{6/4} \frac{n^{2-2\b_{i+1}}}{\om n^{1-\b_i+2^i/3^{i+1}}(1+o(1))}}\\
=&\; \exp\brac{-\frac 12 \om^{1/2} n^{1-\b_{i+1}-2^{i+1}/3^{i+1}}}
\; =\; \exp\brac{-\frac 12 \om^{1/2}}.
\end{align*}
The last line follows because $ \b_{i+1}=1-2^{i+1}/{3^{i+1}}$. For simplicity let $\om= C \log^2 n$ for some large constant $C$.
As  $i \rai$,  $\b_i$ tends to one, and we have that w.h.p. $a(t) \sim \E a(t)$ for any $t=o(N)=N/\om$ say.
Hence $\E(a(t)^2) \le (1+o(1)) (\E a(t))^2$. From an earlier part of this theorem, $\E (a(t))^2 \ge (\E a(t))^2$.
This completes the proof of the theorem.
 \end{proof}

\ignore{

{\bf Remarks.}

We are currently not able to prove asymptotic equality between $\E a(t)$ and $b_t$ for values of $t$ larger than $t=N/\om$.  As $a(t)$ decreases, the wait for an active edge increases; most of the time nothing is happening.
Eventually  there is no possibility of concentration around $\E a(t)$. Indeed when $t=cN$, $c\le 1$ constant,  $\E a(t) \le 1/(1-\sqrt{1-c})$.

If the recurrence for $\E a(t)$ in \eqref{eq:uar_Gnp_activeNodes} is replaced by the deterministic recurrence \eqref{**}
then the solution \eqref{denser} holds for $t \in [1, N-\om]$. So the result $a(t) \sim b_t$ holds in 'mean field' for that range, and $\E a(t) \le b_t (1+o(1)$ always.
}

Simulation results (see Figure \ref{Fig1}), suggest that $\E a(t)$ should continue to track  $b_t$ of \eqref{denser} throughout.

\section{Proof of Lemma \ref{DetLem}. }

To solve \eqref{eq:uar_Gnp_activeNodes},
the first step is to solve the equivalent deterministic recurrence \eqref{**}, i.e.,
\begin{equation}\label{**1}
a_{t+1}= a_t\brac{1 - \frac{a_t-1}{2(N-t)}}.
\end{equation}
An approximate solution can be obtained  by
replacing $a_t$ in this recurrence by a differential equation in $b(t)$. The initial condition $b(0)=a_0=n$ gives
	\begin{align}\label{b(t)}
		\frac{ d\, b(t)}{d t} = - \frac{\binom{b(t)}{2}}{N-t} \qquad
		\implies \qquad  b(t) = \frac{1}{ 1-\left(1- \frac{1}{n} \right)\sqrt{1-\frac{t}{N}}}.
	\end{align}
We now prove Lemma \ref{DetLem}, restating it
below for convenience.
\begin{lemma} 
$\;$
\begin{enumerate}[label=(\roman*)]
\item
Let $c_t$ be given by,
 \begin{equation}\nonumber 
  c_t = \frac{n^2}{n+t}.
\end{equation}
For $t = N/\om$, we have $a_t=c_t (1+\e_t)$ for
$\e_t=O(t/(N-t))$. Thus if $t=o(N)$,
$a_t \sim c_t$.\vspace*{1ex}
\item
For $t \le N-\om$, where $\om \rai$,
we have $a_t=b_t (1+\e_t)$ for $\e_t=O(1/\om)$ and
$b_t$ given by
\begin{equation}\nonumber 
b_t=\frac{1}{1-(1-1/n)\sqrt{1-t/N}}.
\end{equation}
\end{enumerate}
\end{lemma}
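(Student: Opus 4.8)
The plan is to analyse the deterministic recurrence \eqref{**1} directly, using the substitution suggested by the closed form \eqref{b(t)}. Note that \eqref{b(t)} is the same as $1-1/b(t)=(1-1/n)\sqrt{1-t/N}$, so it is natural to track $w_t:=1-1/a_t=(a_t-1)/a_t$, with $w_0=1-1/n$. A short rearrangement of \eqref{**1} gives $a_{t+1}-1=(a_t-1)\brac{1-\frac{a_t}{2(N-t)}}$; dividing this by \eqref{**1} itself produces the \emph{exact} one--step identity
\begin{equation}\label{eq:wrec}
w_{t+1}=w_t\cdot\frac{2(N-t)-a_t}{2(N-t)-a_t+1}=w_t\brac{1-\frac{1}{2(N-t)-a_t+1}},
\end{equation}
hence $w_t=(1-1/n)\prod_{s=0}^{t-1}\brac{1-\frac{1}{2(N-s)-a_s+1}}$. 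Since $1-1/b_t=(1-1/n)\sqrt{1-t/N}$ and $\sqrt{1-t/N}=\prod_{s=0}^{t-1}\sqrt{\tfrac{N-s-1}{N-s}}$ telescopes, both parts of the lemma follow once we show that this product matches $(1-1/n)\sqrt{1-t/N}$ up to the stated relative error. (One could instead run a discrete Gronwall/Wormald comparison of $a_t$ with $b(t)$ via a bound on $b''$, but \eqref{eq:wrec} gives a cleaner, self-contained route.)

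For part (ii) I would compare the two products factor by factor. With $M:=N-s$ and $a:=a_s$, an elementary estimate (partial fractions, and $|3-2a|\le 2a$ for $a\ge 1$) gives
\[
\log\brac{1-\frac1{2M-a+1}}-\tfrac12\log\brac{1-\tfrac1M}=O\brac{\frac{a}{M^2}},
\]
and this quantity is $\le 0$ provided $a_s\ge 2$ and $M\ge a_s$, both of which hold except during a short final phase where $a_t=1+o(1)$ (treated directly from \eqref{**1}); on the rest, summing the inequality gives $a_t\le b_t$. Summing the error estimate gives $w_t=(1-1/n)\sqrt{1-t/N}\,(1-\theta_t)$ with $0\le\theta_t=O(\eta_t)$, where $\eta_t:=\sum_{s<t}a_s/(N-s)^2$. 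Feeding this into $1/a_t=1-w_t$, $1/b_t=1-(1-1/n)\sqrt{1-t/N}$, and using $(1-1/n)\sqrt{1-t/N}=1-1/b_t<1$, one gets $a_t=b_t\brac{1+O(b_t\eta_t)}$. It remains to bound $b_t\eta_t$: bootstrapping $a_s\le b_s\le\min\{n,2N/s\}$ and splitting the sum at $s=\floor{N/2}$, for $t\le N/2$ the head gives $\eta_t=O((\log N)/N)$ and $b_t\eta_t=O((\log n)/n)$, whereas for $t>N/2$ we have $b_t=O(1)$ and the tail gives $\eta_t=O(1/(N-t))$; either way $b_t\eta_t=O(1/\om)$ for $t\le N-\om$, which is part (ii).

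For part (i), for $t\le N/\om$ I would Taylor--expand \eqref{b(t)} in $x:=t/N\le 1/\om$: since $(1-1/n)/(2N)=1/n^2$,
\[
\frac1{b_t}=1-(1-1/n)\sqrt{1-x}=\frac1n+(1-1/n)\frac x2+O(x^2)=\frac{n+t}{n^2}+O\brac{(t/N)^2}=\frac1{c_t}+O\brac{(t/N)^2},
\]
so dividing by $1/c_t=(n+t)/n^2$ gives $b_t=c_t\brac{1+O(t/(N-t))}$. Combined with part (ii), $a_t=c_t\brac{1+O(b_t\eta_t)+O(t/(N-t))}$, and on the range $t\le N/\om$ both error terms are $O(t/(N-t))$: for the Taylor remainder this is immediate, and for $b_t\eta_t$ one uses $b_t\le\min\{n,2N/t\}$ together with $\eta_t=O(nt/N^2)$ when $t\le n$ and $\eta_t=O\brac{N^{-2}\sum_{s<t}b_s}=O\brac{n^2N^{-2}\log(t/n)}$ when $t>n$, which give $b_t\eta_t=O(t/N)$ in both cases. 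Thus $a_t=c_t\brac{1+O(t/(N-t))}$, and in particular $a_t\sim c_t\sim n^2/(n+t)$ when $t=o(N)$.

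The main obstacle is the error bookkeeping in the product comparison, and it localises in two places. First, pinning down the \emph{sign} of the per-step discrepancy, so as to get $a_t\le b_t$ without circular reasoning, and supplying the separate (easy) treatment of the final phase in which $a_t$ is within $o(1)$ of $1$. Second, controlling $\sum_{s<t}a_s/(N-s)^2$ in the two delicate regimes: $t$ small, where $b_t$ can be as large as $\Theta(n)$ so $\eta_t$ must be shown to be correspondingly tiny, and $t$ close to $N$, where the sum behaves like $\sum_{M\ge N-t}M^{-2}$ and must be cut off using $N-t\ge\om$. The bootstrap $a_s=\Theta(b_s)$ and splitting the sum at $s=\floor{N/2}$ handle both; everything else is routine manipulation of \eqref{b(t)}.
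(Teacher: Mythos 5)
Your proposal is correct in substance, and it takes a genuinely different route from the paper. The paper proves part (i) by a direct induction on the ratio $a_t/c_t$ (tracking a coefficient $\g\in[-2,0]$ in the ansatz $a_t=c_t(1+\g t/(N-t))$), and proves part (ii) by propagating a multiplicative error $\e_t$ through one step of \eqref{**1} against the one-step change of $b_t$, then summing the accumulated terms $\l^2+\l/(N-t)$ with $\l=(b_t-1)/(2(N-t))$ via the integral $\int\brac{\frac{N}{t^2(N-t)}+\frac{N}{t(N-t)^2}}dt$. You instead observe that \eqref{**1} admits the exact product identity $1-1/a_{t+1}=(1-1/a_t)\brac{1-\frac{1}{2(N-t)-a_t+1}}$, so that $1-1/a_t$ and $1-1/b_t=(1-1/n)\sqrt{1-t/N}$ are both telescoping products that can be compared factor by factor; this is a cleaner mechanism, since the error enters only through $\sum_s a_s/(N-s)^2$, and your sign computation ($M(3-2a)+(a-1)^2\le 0$ for $a\ge 2$, $M\ge a$) recovers the monotone comparison $a_t\le b_t$ that the paper gets from convexity considerations elsewhere. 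You also reverse the logical order: the paper proves (i) independently and then uses $c_t\sim b_t$ inside the proof of (ii), whereas you deduce (i) from (ii) plus the Taylor expansion $b_t=c_t(1+O(t/(N-t)))$; both directions are legitimate, and your error bookkeeping ($b_t\eta_t=O(t/N)$ for $t\le N/\om$, $b_t\eta_t=O(1/\om)$ for $t\le N-\om$ with $\om$ growing slowly) checks out. Two small points to tidy in a full write-up: for $n<t$ the head of the sum forces $\eta_t=O(n^2/N^2+N^{-1}\log(t/n))$ rather than $O(n^2N^{-2}\log(t/n))$ alone (the conclusion $b_t\eta_t=O(t/N)$ is unaffected), and the bootstrap $a_s\le b_s$ should be phrased as an induction so that the hypotheses $a_s\ge 2$, $N-s\ge a_s$ of the sign step are verified from the already-established bounds at earlier times, with the terminal phase $a_s\in[1,2)$ handled by the trivial bound $a_s\le 2b_s$ as you indicate.
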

\begin{proof}
\emph{(i)} We define $\e_t$ so that $a_t=c_t (1+\e_t)$, and show by induction that for $0 \le t \le N/\om$,
we have $-2t/(N-t)\le \e_t \le 0$, starting from $a_0=c_0=n$ and $\e_0 = 0$.
We take an arbitrary $0 \le t \le N/\om$, and evaluate the recurrence \eqref{**1} for $a_{t+1}$,
assuming inductively that $a_t=c_t(1+\g t/(N-t))$ for some $\g \in [-2, 0]$.
\begin{align*}
\frac{a_t-1}{2(N-t)}
= \; & \; \frac{\frac{n^2}{n+t}\brac{1+\frac{\g t}{N-t}}-1}{2(N-t)}
\; = \; \frac{n^2- n - t}{2(n+t)(N-t)} \; + \; \frac{\g t n^2}{2(n+t)(N-t)^2} \\
\; = \; & \; \frac{2(N-t) + t}{2(n+t)(N-t)} \; + \; \frac{\g t n^2}{2(n+t)(N-t)^2} \\
\; = \; & \; \frac{1}{n+t} \; + \; \frac{t}{2(n+t)(N-t)} \; + \; \frac{\g t n^2}{2(n+t)(N-t)^2}.
\end{align*}
Thus
\begin{align*} \nonumber
\frac{a_{t+1}}{c_{t+1}}
= \; & \; \frac{c_t}{c_{t+1}}\brac{1+\frac{\g t}{N-t}}\brac{1 - \frac{1}{n+t} - \frac{t}{2(n+t)(N-t)} - \frac{\g t n^2}{2(n+t)(N-t)^2}}\\
\nonumber
= \; & \; \brac{1+\frac{1}{n+t}}\brac{1+\frac{\g t}{N-t}}\\
& \; \times \brac{1 - \frac{1}{n+t} - \frac{t}{2(n+t)(N-t)} - \frac{\g t n^2}{2(n+t)(N-t)^2}}\\
= \; & \; 1 +\frac{\g t}{N-t} - \frac{t}{2(n+t)(N-t)} - \frac{\g t n^2}{2(n+t)(N-t)^2} + \d
\; = \; 1 + \xi +\d,
\nonumber
\end{align*}
where $-1/(N-t) \le \d \le 1/(2(N-t))$, by
inspecting the terms contributing to $\d$ and
using the assumption that $t\le N/\om$.
Now we have, recalling that $-2 \le \g \le 0$,
\[
\xi = \frac{\g t}{N-t}\brac{1 - \frac{n^2}{2(n+t)(N-t)}} - \frac{t}{2(n+t)(N-t)}
\le - \frac{1}{2(N-t)}
\]
and
\begin{align*}
\xi
& \ge -\frac{2t}{N-t} - \frac{t}{2(n+t)(N-t)}\\
& \ge -\frac{2(t+1)}{N-t-1}
+ \frac{2N}{(N-t-1)(N-t)}
- \frac{t}{2(n+t)(N-t)}\\
& \ge -\frac{2(t+1)}{N-t-1}
+ \frac{1}{N-t}\brac{\frac{2N}{N-t-1} - \frac{t}{2(n+t)}}
\ge -\frac{2(t+1)}{N-t-1}
+ \frac{1}{N-t}.
\end{align*}
The bounds on $\xi$ and $\d$ imply that
${a_{t+1}}/{c_{t+1}} = 1 + \g'{(t+1)}/{(N-t-1)}$, for some
$-2 \le \g' \le 0$.

\vspace{0.1in}\noindent
{\emph{(ii)}} We note firstly  that $c_t \ge b_t$ and establish that for $t=o(N)$, $c_t \sim b_t$.
Let $t_1 \le N/\om$. Using  $\sqrt{1-x}= 1-x/2-x^2/4-O(x^3)$, it can be checked that
\[
\frac{1}{1-(1-1/n)\sqrt{1-t_1/N}} =(1+\d) \frac{n^2}{n+t_1},
\]
where $\d=O(t_1^2/n^3(n+t_1))=O(1/\om^2)$.  Thus $a_{t_1}=b_{t_1}(1+O(1/\om))$.

Let $\th=(n-1)/n$.
Assume $N-t\ge \om \rai$. Then
\begin{align*}
\frac{1}{b_{t+1}} = & 1-\th \sqrt{1-\frac{t+1}{N}}=  1-\th \sqrt{1-\frac{t}{N}}\sqrt{1-\frac{1}{N-t}}\\
=&1-\th \sqrt{1-\frac{t}{N}}\brac{1-\frac{1}{2(N-t)}-\frac{1}{4(N-t)^2}(1+O(1/\om))},
\end{align*}
and so
\begin{align*}
\frac{b_t}{b_{t+1}}=1+ \frac{\th \sqrt{1-\frac{t}{N}}}{2(N-t)(1-\th\sqrt{1-\frac{t}{N}})}+
\frac{\th \sqrt{1-\frac{t}{N}}}{4(N-t)^2(1-\th\sqrt{1-\frac{t}{N}})}(1+o(1)).
\end{align*}
Let
\[
\l=\frac{\th \sqrt{1-\frac{t}{N}}}{2(N-t)(1-\th\sqrt{1-\frac{t}{N}})},
\]
then also $\l=(b_t-1)/(2(N-t))$.
Thus
\begin{flalign*}
a_{t+1}=&a_t \brac{1-\frac{a_t-1}{2(N-t)}}\\
=& b_t(1+\e_t)\brac{1-\frac{(b_t-1)(1+\e_t)+\e_t}{2(N-t)}}\\
=& b_{t+1}\brac{1+\l+\frac{\l}{2(N-t)}(1+o(1))}(1+\e_t)\brac{1-\l(1+\e_t)-\frac{\e_t}{2(N-t)}}\\
=&b_{t+1}\brac{1+\e_t-\l^2-\frac{\l}{2(N-t)}+O\brac{\e_t\; \brac{\l+\frac{1}{N-t}}}}.
\end{flalign*}
It follows from $\th \sqrt{1-x} < \sqrt{1-x} \le 1-x/2$, that $b_x \le 2/x$. Thus $b_t \le 2N/t$,
$\l \le N/(t(N-t))$ and
\[
\l^2 \le \frac{N}{t^2(N-t)}, \qquad \qquad \frac{\l}{N-t} \le \frac{N}{t(N-t)^2}.
\]
Finally with $t_1=N/\om$ as above
\[
|\e_t| \le |\e_{t_1}|+ \sum_{t_1}^t \l^2+\frac{\l}{N-t} \sim  \int_{t_1}^t \frac N{t^2(N-t)} +\frac{N}{t(N-t)^2}
=\int F(t) dt.
\]
Denote $t_1=c_1N$ and $t_2=c_2N$ where $t_2 \le N-\om$. Then
\begin{align*}
\int F(t) dt=& \int \frac{1}{t^2}+ \frac{2}{Nt}-\frac{2}{N(N-t)} +\frac{3}{(N-t)^2}\\
=&\left[-\frac 1t +\frac{2}{N}\log t +\frac{2}{N}\log (N-t) + \frac{3}{N-t}\right]^{\!Nc_2}_{\!Nc_1}\\
\le&\frac 1N \brac{  \frac 1c_1+ 2\log \frac{c_2}{c_1}+2 \log
\frac{1-c_2}{1-c_1} +\frac 1{1-c_2}},
\end{align*}
and thus if $c_2=1-\om/N$, from the last term,
\[
|\e_{t_2}| \le O\bfrac{1}{\om}.
\]

\end{proof}

\section{The effect of stubborn vertices.}

A vertex is  stubborn (intransigent, autocratic, dictatorial) if it holds fixed views, and  although happy to accept followers, it refuses to   follow the views of others. Typical examples include news networks, politicians and some cultural or religious groups. Stubborn vertices can only be root vertices.

We note that voting  in distributed systems in the presence of stubborn agents has been extensively studied
see e.g., \cite{Mucko}, \cite{Nico}, \cite{Yildiz} and references therein.

The effect of stubborn vertices on the number of other active vertices in the network depends on the edge density, as is illustrated by the next theorem. Let $a_k(t)$ be the number of {\em active independent vertices} at step $t$ in the presence of $k \ge 0$ stubborn vertices. As the stubborn vertices are never absorbed, the total number of roots is $a_k(t)+k$.

Let $\b=2k-1$, and
\begin{equation}\label{b1(t)}
b_k(t) \sim \b \;\frac{ \bfrac{n}{n+\b} \brac{1-\frac tN}^{\b/2}}{1- \bfrac{n}{n+\b} \brac{1-\frac tN}^{\b/2}}.
\end{equation}
Essentially we solve the  deterministic recurrence equivalent to \eqref{**} to obtain \eqref{b1(t)}, and argue
by concentration, convexity and super-martingale properties that $b_1(t)$ is the asymptotic solution ($t=o(N)$)
or an effective upper bound ($t \le N$). Due to space limitations the proof is only given in outline.

\begin{theorem}\label{Dicks}
(i) {\bf One stubborn vertex.}
Let  $N={n \choose 2}+n$, then  provided $t\le N/\om$,  the number of  independent active vertices
$a_1(t) \sim a_0(t) \sim b_t$, w.h.p., where $b_t$ is the solution to \eqref{**} as given by \eqref{denser}.
If $t=cN$ then $\E a_1(cN) \le b_1(cN) \le (\sqrt{1-c})\, b_t$.

(ii) {\bf A constant number $k$ of stubborn vertices.} Let $k \ge 1$ be integer, and $N=k+{n \choose 2}$. If $k$ is constant, and $t\le N/\om$ then
$a_k(t) \sim  b_k(t)$ w.h.p., and for any $t \le N$,  $\E a_k(t) \le b_k(t) (1+o(1)$.

(iii) {\bf The number $k$ of stubborn vertices is unbounded.} If $t=N/\om$ and $\om/k \rightarrow 0$, then w.h.p no independent active vertices are left by step $t$.
\end{theorem}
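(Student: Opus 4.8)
The plan is to reduce Theorem~\ref{Dicks}(iii) to the single estimate $\E a_k(N/\om)=o(1)$. Once that is in hand, Markov's inequality finishes: $a_k(N/\om)$ is a nonnegative integer, so $\Pr(a_k(N/\om)\ge 1)\le \E a_k(N/\om)=o(1)$; and since $a_k(t)$ is monotone nonincreasing in $t$, having $a_k(N/\om)=0$ w.h.p.\ forces the absence of independent active vertices at every step $\ge N/\om$, in particular at the terminal step. So no concentration or martingale machinery is actually needed here, only convexity plus Markov.

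\emph{The recurrence.} As in the proof of Theorem~\ref{thm:uar_Gnp_comp1}, reveal the edges of the underlying (complete) graph on the $n$ independent and $k$ stubborn vertices in uniformly random order, and let $a_k(t)$ be the number of active independent vertices after $t$ edges are revealed, so $a_k(0)=n$. The revelations that decrease $a_k(t)$ are exactly those of the $\binom{a_k(t)}{2}$ edges between two active independent roots and the $k\,a_k(t)$ edges between an active independent root and a stubborn root; as in Theorem~\ref{thm:uar_Gnp_comp1} none of these has yet been revealed, and the next revealed edge is uniform among the $N-t$ that remain. With $\b=2k-1$ we have $\binom{a_k(t)}{2}+k\,a_k(t)=\tfrac12 a_k(t)(a_k(t)+\b)$, hence
\begin{equation}\label{eq:Drec}
\E\bigl[a_k(t+1)\mid a_k(t)\bigr]=a_k(t)-\frac{a_k(t)\bigl(a_k(t)+\b\bigr)}{2(N-t)} .
\end{equation}

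\emph{Bounding the expectation and concluding.} Taking expectations in \eqref{eq:Drec} and discarding the nonnegative term $\E a_k(t)^2/(2(N-t))$ gives $\E a_k(t+1)\le \E a_k(t)\bigl(1-\tfrac{\b}{2(N-t)}\bigr)$, so iterating from $\E a_k(0)=n$,
\[
\E a_k(t)\;\le\; n\prod_{s=0}^{t-1}\Bigl(1-\frac{\b}{2(N-s)}\Bigr)\;\le\; n\exp\Bigl(-\frac{\b}{2}\sum_{s=0}^{t-1}\frac{1}{N-s}\Bigr)\;\le\; n\,\exp\Bigl(-\frac{\b\,t}{2N}\Bigr).
\]
At $t=N/\om$ the exponent equals $-\b/(2\om)=-(1-o(1))k/\om$, so $\E a_k(N/\om)\le n\,e^{-(1-o(1))k/\om}$. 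A sharper version, obtained by keeping the quadratic term and solving the associated deterministic recurrence exactly as in the proof of Lemma~\ref{DetLem} and of parts (i)--(ii) above (which gives $\E a_k(t)\le(1+o(1))b_k(t)$ with $b_k$ as in \eqref{b1(t)}), replaces the prefactor $n$ by $2\b\le 4k$: indeed for $t\ge N/\om$ with $\om=o(k)$ the quadratic term has already pushed $a_k$ below $\b$, and then $b_k(t)=\tfrac{\b q_t}{1-q_t}\le 2\b q_t$ with $q_t\le e^{-\b/(2\om)}$. In either case, choosing $\om$ to grow slowly enough — any $\om\to\infty$ with $\om=o(k/\log k)$ will do, and this is consistent with the stated hypothesis $\om=o(k)$ — forces $\tfrac{\b}{2\om}\gg\log\min\{n,k\}$, hence $\E a_k(N/\om)=o(1)$, and the Markov argument of the first paragraph completes the proof.

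\emph{The main obstacle.} The only nonroutine point is the sharp form of the expectation bound: the analogue of Lemma~\ref{DetLem} for the recurrence \eqref{eq:Drec} \emph{when $k=k(n)$ grows with $n$}. One must redo the computation of Section~3, carrying the $k$-dependence through the error terms $\e_t$ and checking that the deterministic solution remains $(1+o(1))b_k(t)$ throughout $[0,N/\om]$. This refinement is needed only for very slowly growing $k$; whenever $k$ is at least polylogarithmic in $n$ the crude geometric bound above already yields $\E a_k(N/\om)=o(1)$ directly, with no appeal to Lemma~\ref{DetLem}.
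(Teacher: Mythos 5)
Your proposal addresses only part (iii) of the theorem. Parts (i) and (ii) --- the derivation and validity of \eqref{b1(t)}, the claim $a_k(t)\sim b_k(t)$ w.h.p.\ for $t\le N/\om$, and in particular the bound $b_1(cN)\le(\sqrt{1-c})\,b_t$ --- are nowhere established; you cite ``parts (i)--(ii) above'' as if already proved, but the proposal contains no such proof. For what you do prove, your route is sound and in one respect cleaner than the paper's outline: you set up the same recurrence as the paper's \eqref{Dde} (your identity $\binom{a}{2}+ka=\tfrac12 a(a+\b)$ is exactly its right-hand side), but instead of passing to the ODE solution $b_k(t)$ you discard the nonnegative quadratic term to get the rigorous linear bound $\E a_k(t)\le n\exp(-\b t/2N)$ with no appeal to convexity or concentration, and finish by Markov plus monotonicity. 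This is fully self-contained whenever $k/\om\gg\log n$, whereas the paper instead substitutes $t=N/\om$ into the closed form \eqref{b1(t)}.

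Two genuine gaps remain. First, for $k$ growing more slowly than polylogarithmically in $n$ your argument falls back on $\E a_k(t)\le(1+o(1))\,b_k(t)$ for growing $k=k(n)$, which you yourself flag as ``the main obstacle'' and do not prove: convexity gives $\E a_k(t)\le$ the deterministic solution, but identifying that solution with $b_k(t)$ up to a factor $1+o(1)$ requires redoing the error analysis of Lemma \ref{DetLem} with the $k$-dependence carried through, and that is missing. Second, even granting that bound, your own computation shows the conclusion needs $\om=o(k/\log k)$, not merely $\om=o(k)$ as stated: at $t=N/\om$ one has $b_k(t)\sim \b\, e^{-k/\om}/(1-e^{-k/\om})\approx 2k\,e^{-k/\om}$, which is not $o(1)$ when, say, $k=\om\log\om$. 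To your credit, the paper's own outline silently drops the prefactor $\b=2k-1$ at exactly this point (it concludes $b_1(t)\sim o(1)/(1-o(1))$ from $q=o(1)$ alone), so this mismatch is arguably a defect of the theorem as stated rather than of your argument; but a complete proof must either strengthen the hypothesis on $\om$ or supply an argument valid for all $\om=o(k)$.
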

\begin{proof}
To formulate the model, we note that, at the end of step $t$ there are $k a_k(t)$ edges between stubborn and independent active vertices. Writing $a=a_k(t)$ we extend \eqref{eq:uar_Gnp_activeNodes} with $N={n \choose 2}+k$ to
\begin{equation}\label{Dde}
\E a_k(t+1)= a_k(t) -\frac{ka_k}{N-t} - \frac{a_k(a_k-1)}{2(N-t)}.
\end{equation}
Solving the equivalent differential equation we obtain \eqref{b1(t)}

In case (iii), let $t=N/\om$ and $k=\l \om$ where $\l \rai$, but $k=o(n)$. Then
\[
(1-t/N)^{\b/2} \sim e^{-t(2k-1)/2N}= e^{-(k-1/2)/\om} \sim e^{-\l} =o(1).
\]
Thus \eqref{b1(t)} tends to $b_1(t) \sim o(1)/(1-o(1))$ and the result follows.
\end{proof}
We remark that if the network is sparse ($c=o(1)$),  and there are only a few stubborn vertices, these will  have little effect. However, if the network is dense ($c$ is a positive constant),
there are fewer independent active vertices, even if $k$ is constant.  On the other hand if $k \rai$
even in sparse networks where $t=N/\om$, the number of independent active vertices can tend to zero.
This indicates in a simplistic way the effect of edge density (increasing connectivity)  in social networks
 on the formation of independent opinions in the presence of vertices with fixed views. It also indicates that even in sparse networks,  a large number of stubborn vertices can lead to the suppression of independent  opinion formation.

Figure \ref{Fig2}, illustrates the above Theorem.
The plots show the number of active vertices in the presence of stubborn vertices (dictators).
The number $k$ of stubborn vertices is equal to $1$ in
the left hand plot and $5$ in the right hand  plot.
The plots are based on $G(n,p)$, for $n=1000$ and $p \ge 0.1$.
The upper curve in the right hand figure is $b_k(t)+k$, the total number of active vertices.
The middle curve is $b_t$ from  \eqref{denser}. The  simulation plot marked by $+$ symbols is the final number of active vertices in a system without stubborn vertices, as in Figure \ref{Fig1}. The lower curve is $b_k(t)$, and the associated simulation is the number of independent active vertices in the presence of dictators.

 In the left hand plot for $k=1$, the curves $k+b_k(t)$ and $b_t$ as given by \eqref{denser} are effectively identical, so a distinct upper curve is missing.
 The lower curve is $b_1(t)$, and its associated plot is the number of independent active vertices in the presence of stubborn vertices.

\begin{figure}[H]
 \centering
  \includegraphics[width=.48\linewidth]{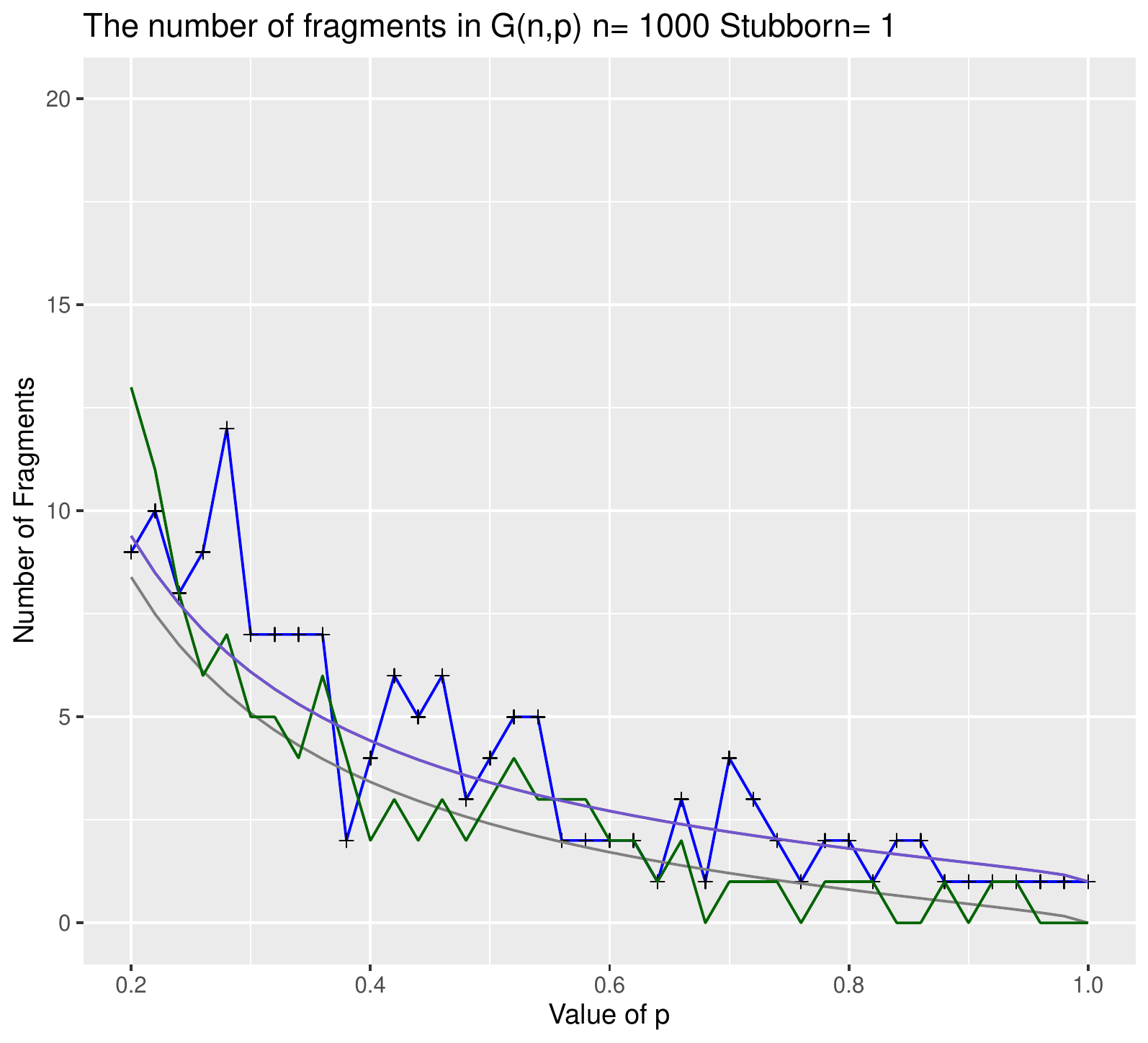}   %
\quad \includegraphics[width=.48\linewidth]{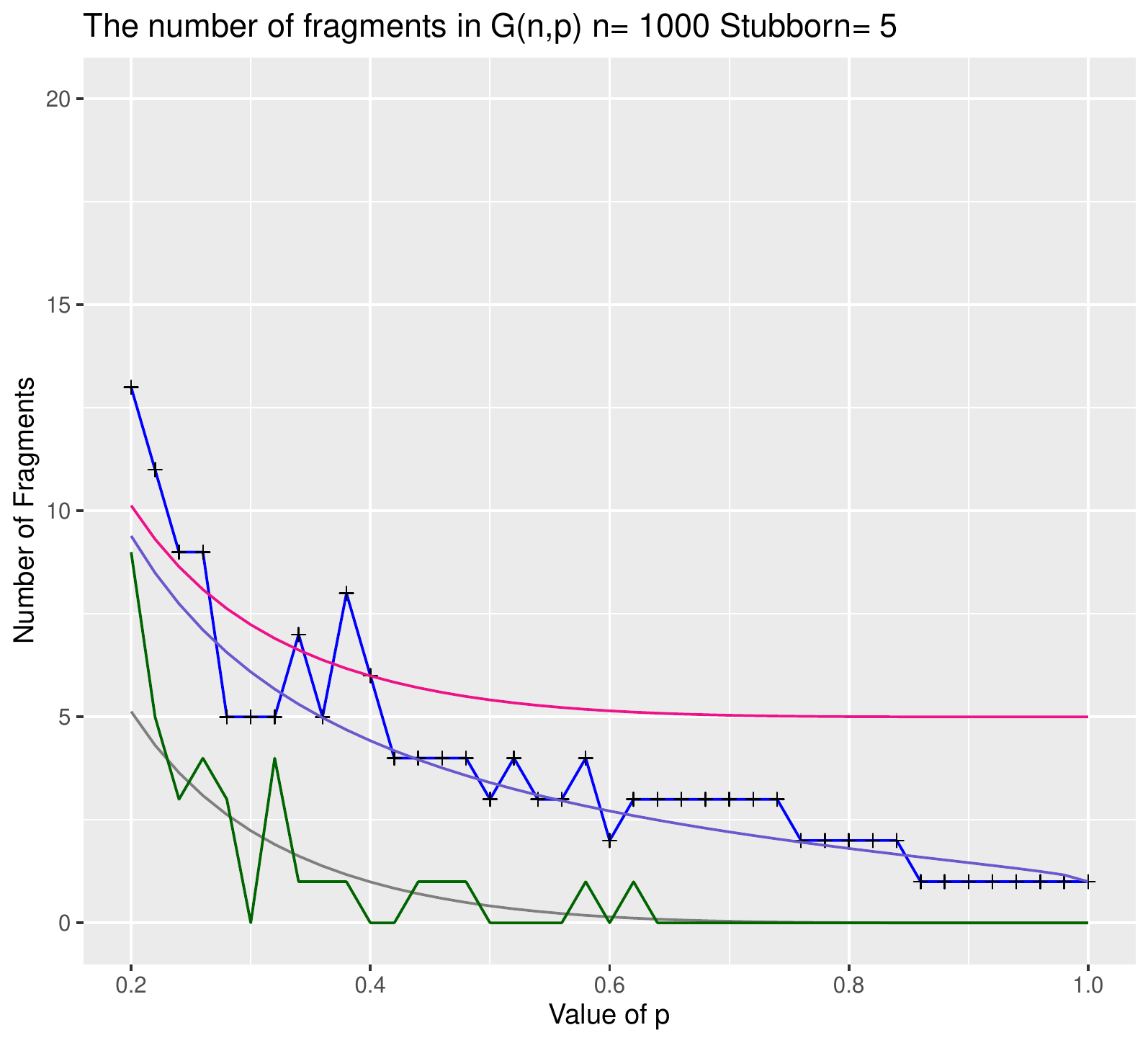}

	\caption{The number of active vertices in the presence of stubborn vertices (dictators). The left hand plot is for $k=1$ stubborn vertices, and the right hand  plot for $k=5$ stubborn vertices. The plots are based on $G(n,p)$, for $n=1000$ and $p \ge 0.1$.
 } \label{Fig2}
\end{figure}

\ignore{
\paragraph{As the number of stubborn vertices $k$ becomes large.}
Let $k \rai$, but $k=o(n)$. Then $\b \sim 2k$, and
\[
b_k(t) = 2k \frac{e^{-k t/N}}{1-e^{-kt/N}} (1+o(k/n)).
\]+++
Let $t^*=(N/k)(\log k+\om)$, then
\[
b_k(t^*) \sim (1+o(1)) 2 e^{-\om},
\]
giving an  extinction time for the independent vertices $a_k(t)$. This is approximated by $p^*=(\log k) /k*$ in the plot.
}

\section{The largest fragment  in $G(n,m)$}

Let $F_{(1)}(m)$ denote the size of the largest fragment in $G(n,m)$.
The value of $F_{(1)}(m)$, the number of followers of the dominant influencer, (we assume the influencer follows themself), will depend on the number of active vertices $a(m)$. As both $a(m)$ and $F_{(1)}$ are random variables, it is easier to fix $a(m)=k$, and
 study $\E F_{(1)}=\E( F_{(1)} \mid k)$  for a given value $k$. In the limit as $n \rai$, $\E F_{(1)}$ converges to a continuous process known as stick breaking.

The first step is to describe a consistent discrete model. This can be done in several ways, as a multivariate Polya urn, as the placement of $n-k$ unlabelled balls into $k$ labelled boxes, or as randomly choosing $k-1$ distinct vertices from $\{2,...,n\}$ on the path $1,2,...,n$. The latter corresponds to the limiting stick breaking process.

\paragraph{ Looking backwards: A Polya urn process.}
If we stop the process when there are exactly $k$ active vertices for the first time, then at the previous step there were $k+1$ active vertices. Let the $k$ active vertices be $A_k=\{v_1,...,v_k\}$, and  let
$A_{k+1}= \{v_1,...,v_k, b\}$ be the active vertices at the previous step, where the vertex $b$ was absorbed. As the edges $bv_1,...bv_k$ are equiprobable, the probability $b$ was absorbed by $v_i$ is $1/k$.

Working backwards from $k$ to $n$ is equivalent to a $k$-coloured Polya urn, in which at any step a ball is chosen at random and replaced with 2 balls of the same colour. At the first step backwards any one  of the colours $1,...,k$ is chosen and replaced with 2 balls of the chosen colour (say colour $i$).
This is equivalent to the event that vertex $b$ attaches to the active vertex $v_i$.

Starting with $k$ different coloured balls and working backwards for $s$ steps is equivalent to placing $s$ unlabelled balls into $k$ cells. Thus any vector of occupancies $(s_1,...,s_k)$ with $s_1+\cdots+s_k=s$ is equivalent to a final number of balls $(s_1+1,s_2+1,...,s_k+1)$; which is the sizes of the fragments at this point. The number of distinguishable solutions to $(s_1,...,s_k)$ with $s_1+\cdots+s_k=s$ is given by
\[
A_{s,k}={s+k-1 \choose k-1}.
\]
As finally  the number of vertices is $s+k=n$, if the process stops with $k$ distinct fragments, there are $N(k)={n-1 \choose k-1}$ ways to partition the vertices among the fragments, all partitions being equiprobable.

An illustration of the balls into cells process  is given by the stars and bars model in Feller \cite{FeI}. Quoting from page 37, \lq We use the artifice of representing the $k$ cells by the spaces between $k+1$ bars and the balls by stars. Thus $|***|*|\;|\;|\;|****|$ is used as a symbol for a distribution of $s=8$ balls into $k=6$  cells with occupancy numbers $3,1,0,0,0,4$. Such  a symbol necessarily starts and ends with a bar but the remaining $k-1$ bars and $s$ stars can appear in an arbitrary order.\rq

That this is equivalent to the above Polya urn model can be deduced from $A_{s+1,k}/A_{s,k}=(s+k)/(s+1)$.
The numerator is the number of positions for the extra ball. Picking a left hand bar corresponds to picking one of the $k$ root vertices. The denominator is the number of ways to de-identify the extra ball; being the number of symbols (urn occupancies) which map to the new occupancy.

The sizes of the fragments can also be viewed as follows. Consider the path $1,2,3,...,n$ with the first fragment starting at vertex 1, the left hand bar.
The choice of  $k-1$ remaining start positions (internal bars) from the vertices $2,3,...,n$ divides the path into $k$ pieces whose lengths are the fragment sizes.
Taking the limit as $n \rai$ and re-scaling the path length to 1, we obtain the limiting process, known as {\em stick breaking}.

\paragraph{Limiting process: Stick breaking.}
The continuous limit as $n \rai$ also arises as a "stick breaking" process.
Let $F_n(i), i=1,...,k$ be the number of balls of colour $i$ when the urn contains $n$ balls. Then  $S(i)=F_n(i)/n$ tends to the length of the $i$-th fragment when the unit interval is broken into $k$ pieces using $k-1$ independent variates $U$ uniformly distributed in $[0,1]$.
This kind of random partitioning process corresponds to a stick-breaking or spacing process\,, in which a stick is divided into $k+1$ fragments. The distribution of the largest fragment is well-studied \cite{Holst(1980)}, \cite{Pyke(1965)}.


\begin{lemma} (\cite{Holst(1980)}, \cite{Pyke(1965)})  \label{lemma:spacings}
	Suppose a stick of length $1$ is broken into $k$ fragments uniformly at random.
Let $ S_{(1 )}\le  S_{(2)} \le \ldots  S_{(k)}$ be the size of these fragments given in increasing order of size. Then, for $i \in \{1, \ldots , k\}$\,,
	\[ \E  S_{(i)} = \frac{1}{k} \sum_{j=0}^{i-1} \frac{1}{k-j} \,.\]
Thus, the largest fragment has size $\E S_{(k)}=H_k \sim \frac 1k \log k$.
\\
The distribution of  $S_{(k)}$ satisfies
\begin{equation} \label{S(k)}
\Pr(S_{(k)} \ge x) \sim \exp \left( -ke^{-kx} \right),
\end{equation}
\end{lemma}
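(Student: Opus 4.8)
The plan is to use the classical representation of uniform spacings via i.i.d.\ exponential random variables. Let $E_1, \ldots, E_k$ be i.i.d.\ $\mathrm{Exp}(1)$ and set $T = E_1 + \cdots + E_k$. Then the vector $(E_1/T, \ldots, E_k/T)$ is distributed exactly as the vector of $k$ fragment lengths $(S_1, \ldots, S_k)$ obtained by breaking a unit stick at $k-1$ uniform points; this is a standard fact (the $k-1$ order statistics of uniforms on $[0,1]$ have the same joint law as the normalised partial sums of $E_1, \ldots, E_k$). So throughout one works with the $E_i$ directly and divides by $T$ at the end; since $T/k \to 1$ a.s.\ by the law of large numbers, for the asymptotic statements $T$ may be replaced by $k$ up to a $(1+o(1))$ factor.

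For the expectation formula, I would compute $\E S_{(i)}$ by the representation $S_{(i)} = E_{(i)}/T$ where $E_{(i)}$ is the $i$-th smallest of the $E_j$. By Rényi's representation, the order statistics satisfy $E_{(i)} = \sum_{j=0}^{i-1} \frac{Z_{j+1}}{k-j}$ for i.i.d.\ $\mathrm{Exp}(1)$ variables $Z_1, Z_2, \ldots$, and moreover $\sum_j Z_j$ telescopes so that $T = \sum_{j=1}^{k} Z_j$ with the $Z_j$ independent of the normalised spacings. Taking expectations and using $\E[E_{(i)}/T] = \E[E_{(i)}]\,/\,\E$-type independence (more carefully: the ratio $E_{(i)}/T$ is independent of $T$, a property of the Dirichlet distribution, so $\E[E_{(i)}/T] = \E[E_{(i)}]/\E[T]$), one gets $\E S_{(i)} = \frac{1}{k}\sum_{j=0}^{i-1}\frac{1}{k-j}$. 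In particular $\E S_{(k)} = \frac1k\sum_{j=0}^{k-1}\frac{1}{k-j} = \frac{H_k}{k} \sim \frac{\log k}{k}$.

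For the tail bound \eqref{S(k)}, I would argue that $\Pr(S_{(k)} \ge x) = \Pr(\exists i: E_i \ge xT)$, and since $T = k(1+o(1))$ with overwhelming probability, this is $\sim \Pr(\exists i: E_i \ge kx)$ for the relevant range of $x$ (namely $x$ of order $(\log k)/k$, so $kx \to \infty$). By inclusion–exclusion / a union bound together with independence, $\Pr(\exists i: E_i \ge kx) = 1 - (1 - e^{-kx})^k$, and since $e^{-kx} \to 0$ in the regime of interest, $1 - (1-e^{-kx})^k \sim 1 - \exp(-k e^{-kx}) $... — more precisely $(1-e^{-kx})^k = \exp(k\log(1-e^{-kx})) = \exp(-ke^{-kx}(1+o(1)))$, giving $\Pr(S_{(k)} \ge x) \sim \exp(-ke^{-kx})$.

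The main obstacle is making the replacement of $T$ by $k$ rigorous uniformly in the relevant range of $x$: one needs that the event $\{T \notin (k(1-\delta), k(1+\delta))\}$ has probability small enough not to disturb the asymptotics of a tail probability that is itself only polynomially small (of order $e^{-\log k} = 1/k$ near the mean). A Chernoff bound on the Gamma variable $T$ handles this — the deviation probability is $e^{-\Theta(\delta^2 k)}$, which is superpolynomially small — so it suffices to take $\delta = \delta(k) \to 0$ slowly, absorb the resulting multiplicative $e^{\pm O(\delta k x)} = e^{\pm O(\delta \log k)} = 1 + o(1)$ error into the statement, and conclude. Since these two tail/expectation facts are classical (references \cite{Holst(1980)}, \cite{Pyke(1965)} are cited), a full write-up would likely just invoke them, but the exponential-representation sketch above is the self-contained route.
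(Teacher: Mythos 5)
The paper does not actually prove this lemma: it is quoted verbatim from Holst and Pyke, and the only proof the authors supply in this vein is the separate combinatorial (stars-and-bars) argument for the finite-$n$ analogue, Lemma \ref{FiniteL}. Your exponential/Dirichlet representation is the standard self-contained route to the cited facts, and the expectation part is correct: the independence of $(E_1/T,\dots,E_k/T)$ from $T$ legitimately gives $\E[E_{(i)}/T]=\E[E_{(i)}]/\E[T]$, and R\'enyi's representation yields $\E S_{(i)}=\frac1k\sum_{j=0}^{i-1}\frac1{k-j}$; note your answer $\E S_{(k)}=H_k/k$ is the right one (the lemma's ``$\E S_{(k)}=H_k$'' is missing the factor $1/k$, as the subsequent ``$\sim\frac1k\log k$'' confirms). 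Your handling of the replacement of $T$ by $k$ (sandwich on the event $|T-k|\le\delta k$ with $\delta=o(1/\log k)$ and a Chernoff bound on the Gamma variable) is also sound.

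The one genuine problem is the last line of your tail computation, where you contradict yourself. You correctly derive
\[
\Pr(S_{(k)}\ge x)\;\sim\;1-(1-e^{-kx})^k\;=\;1-\exp\brac{-ke^{-kx}(1+o(1))},
\]
and then assert this ``gives $\Pr(S_{(k)}\ge x)\sim\exp(-ke^{-kx})$.'' It does not: $1-\exp(-ke^{-kx})$ and $\exp(-ke^{-kx})$ are never asymptotically equal (in the Gumbel regime $kx=\log k+O(1)$ both are $\Theta(1)$ but different constants; for larger $x$ one tends to $0$ and the other to $1$). What your argument actually proves is the classical Gumbel limit $\Pr(S_{(k)}\le x)\sim\exp(-ke^{-kx})$, equivalently $\Pr(S_{(k)}\ge x)\sim 1-\exp(-ke^{-kx})$. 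The displayed claim \eqref{S(k)} in the paper, read literally with ``$\ge$'', is therefore wrong as stated, and the paper's own use of it --- comparing the value at $x=\frac1k(\log k+\om)$, where $1-\exp(-e^{-\om})\sim e^{-\om}$, with the bound $O(e^{-\om})$ of Lemma \ref{FiniteL} --- only makes sense for the complemented form. You should state the corrected version explicitly rather than forcing your (correct) computation to match the misprinted one.
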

Thus $\E F_{(1)}$, the expected size of the largest fragment  among $k$
  tends to $\E S_{(k)}= \frac nk \log k$.

\begin{figure}[t]\label{fig:num-comparison1}
	\centering
\includegraphics[scale=.45]{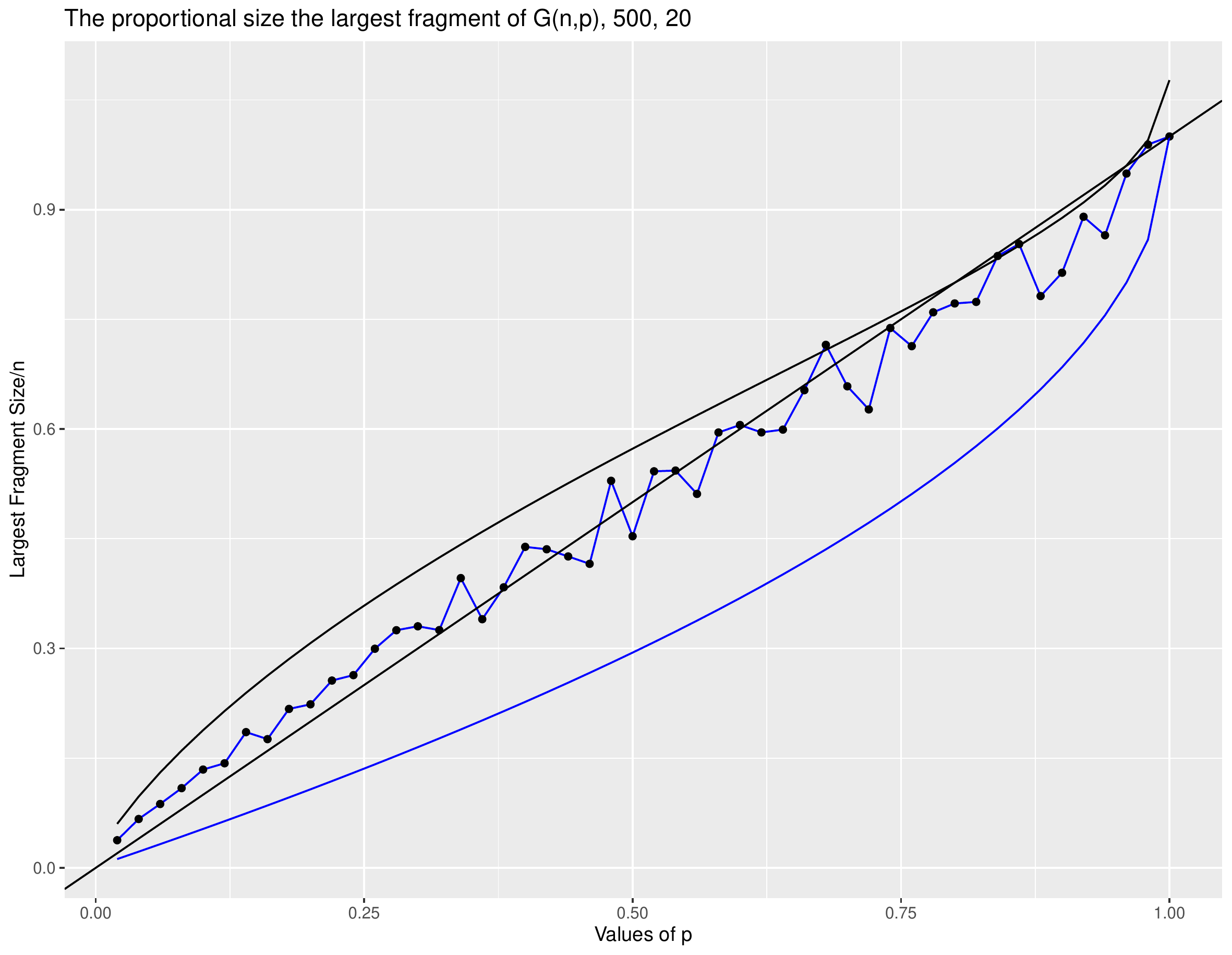}
	\caption{The blue data plot is the average largest fragment size obtained by simulation using $G(n,p)$. For large $m=Np$, $G(n,p) \sim G(n,m)$ so the results are equivalent. The upper line is Lemma \ref{lemma:spacings} for the largest fragment $S_{(k)}$, where $k$ is based on an estimate of $\E a(m)$ using \eqref{Gnm}. The diagonal  black line is $y=x$ for comparison. The lower line is $1/k$, the average component size.
 The plot is for 20 replications at $n=500$, values of $p$ interpolated at $0.02$.}
\end{figure}

\paragraph{ Maximum fragment size. Finite case.}
Lemma \ref{lemma:spacings} although elegant is a limiting result. We check the veracity of the tail distribution   of the maximum fragment size \eqref{S(k)} for finite $n$. It turns out to be quite a lot of work.
The value of \eqref{S(k)}  evaluated at $x=(1/k)(\log k +\log \om)$ is to be compared with  Lemma \ref{FiniteL}.

\begin{lemma}\label{FiniteL}
For $k$  sufficiently large,  $\Pr(F_{(1)} \ge \frac nk (\log k +\om)) =O(e^{-\om})$.\\
If $k\ge 2$ is finite,   the above becomes $\Pr(F_{(1)} \ge \frac n{k-1} (\log k +\om)) =O( e^{-\om})$.
\end{lemma}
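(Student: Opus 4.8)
\emph{Plan.} I would work conditionally on the event $\{a(m)=k\}$ and use the combinatorial description of the fragment sizes given earlier in this section: given that the process stops with $k$ fragments, the vector of fragment sizes $(X_1,\dots,X_k)$ is uniformly distributed over the compositions of $n$ into $k$ positive parts, i.e.\ it is the vector of gap lengths when $k-1$ cut points are chosen uniformly at random among the $n-1$ internal gaps of the path $1,2,\dots,n$ (the stars-and-bars/Polya urn model). In particular the $X_i$ are exchangeable with $\sum_i X_i=n$, so a union bound gives
\[
\Pr\!\big(F_{(1)}\ge t\big)\;=\;\Pr\!\big(\max_i X_i\ge t\big)\;\le\; k\,\Pr(X_1\ge t),
\]
and it suffices to estimate the one-sided tail of a single part.

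Next I would compute that tail exactly. For an integer $s\ge0$, the event $\{X_1\ge s+1\}$ says that none of the first $s$ gaps is a cut point, hence all $k-1$ cuts lie in the remaining $n-1-s$ gaps, so
\[
\Pr(X_1\ge s+1)\;=\;\frac{\binom{n-1-s}{k-1}}{\binom{n-1}{k-1}}\;=\;\prod_{j=0}^{s-1}\frac{n-k-j}{n-1-j}\;=\;\prod_{j=0}^{s-1}\Big(1-\frac{k-1}{n-1-j}\Big).
\]
Applying $1-x\le e^{-x}$ and $\sum_{j=0}^{s-1}\tfrac1{n-1-j}\ge\log\tfrac{n}{n-s}$ bounds this by $\big(1-\tfrac{s}{n}\big)^{k-1}\le e^{-(k-1)s/n}$; this is the discrete version of the continuous identity $\Pr(\text{spacing}\ge x)=(1-x)^{k-1}$ and is enough to match \eqref{S(k)} up to constants. (The cruder bound $\big(1-\tfrac{k-1}{n-1}\big)^s$ is also available and sometimes more convenient.)

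Finally I would substitute the two thresholds. For $k$ large, take $t=\tfrac nk(\log k+\om)$; we may assume $t\le n$ (equivalently $\log k+\om\le k$), since otherwise $t$ exceeds the trivial bound $n$ on any fragment and the probability is $0$. With $s=\lceil t\rceil-1\ge t-1$ one gets $(k-1)s/n\ge\frac{k-1}{k}(\log k+\om)-\frac{k-1}{n}$, hence
\[
k\,\Pr(X_1\ge\lceil t\rceil)\;\le\;k\,e^{-\frac{k-1}{k}(\log k+\om)+\frac{k-1}{n}}\;=\;k^{1/k}\,e^{\frac{k-1}{n}}\,e^{-\frac{k-1}{k}\om}\;=\;O\!\big(e^{-\om}\big),
\]
using $k^{1/k}\to1$, $e^{(k-1)/n}\le e$ (as $k\le n$), and $\frac{k-1}{k}\om\ge\om-1$ in the range $\om\le k$. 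For a fixed $k\ge2$ the factor $\frac{k-1}{k}<1$ would only give $O(e^{-(1-1/k)\om})$, which is exactly why the statement bumps the threshold to $t=\tfrac{n}{k-1}(\log k+\om)$: repeating the computation with this $t$ replaces $\frac{k-1}{k}$ by $1$ and yields $k\,\Pr(X_1\ge\lceil t\rceil)\le k\,e^{-(\log k+\om)+(k-1)/n}=e^{-\om}e^{(k-1)/n}=O(e^{-\om})$ as $n\to\infty$. I expect the only real obstacle to be this constant-tracking: the union bound, the $1-x\le e^{-x}$ step and the rounding in $s$ all have to be kept lossless enough that the leading factor $k$ is cancelled precisely by $e^{-\log k}$, and it is this constraint that forces the $k$ versus $k-1$ split between the two regimes; the underlying combinatorics is otherwise routine.
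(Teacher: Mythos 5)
Your proof is correct, and it rests on the same foundation as the paper's: condition on $k$ fragments, use the uniform stars-and-bars model, and apply a first-moment/union bound over the $k$ fragments. The difference is in execution. The paper computes the point count $N(\ell,k)=k\binom{n-\ell-1}{k-2}$, forms $P_k(\ell)=N(\ell,k)/\binom{n-1}{k-1}$, and sums asymptotically over $\ell\ge\ell_0$; to control the product expansion it must truncate at $\ell\le n/\sqrt{k}$, which leaves an additive $O(1/\sqrt{k})$ term in \eqref{Pk1} (so the paper's large-$k$ bound is really $O(e^{-\om}+1/\sqrt{k})$). You instead evaluate the tail in closed form, $\Pr(X_1\ge s+1)=\binom{n-1-s}{k-1}/\binom{n-1}{k-1}\le(1-s/n)^{k-1}$ (which is the hockey-stick summation of the paper's point counts), and then need only the elementary inequalities $1-x\le e^{-x}$ and $k^{1/k}=O(1)$. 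This buys you a genuinely non-asymptotic bound, no truncation and no $O(1/\sqrt{k})$ residue, and it makes transparent why the threshold must be inflated from $n/k$ to $n/(k-1)$ for bounded $k$: the exponent $(k-1)s/n$ must absorb the union-bound factor $k$ exactly. The only hypothesis you import without proof is the exchangeability/uniformity of the composition vector, but that is established in the paper's preceding discussion and you are entitled to it.
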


\begin{proof}
Recall that $N(k)={n-1 \choose k-1}$ is the number of partitions of $s=n-k$ unlabelled vertices among $k$ distinguishable root vertices (the influencers). Let $N(\ell,k)$ be the number of these partitions which contain at least one fragment of size $\ell$; thus consisting of a root and $\ell-1$ follower vertices.
Using the 'stars and bars' notation given above, there are $k$ ways to choose a left hand bar (a cell) to which we allocate $\ell-1$ stars. There remain $s'=s-(\ell-1)=n-k-\ell +1$ stars to be allocated. Contract the specified cell (stars and delimiters) to a single delimiter. The number of delimiters is now $k'=k-1$, and $n'=s'+k'=n-\ell$. The remaining cells can be filled in ${n'-1 \choose k'-1}$ ways, and thus
\[
N(\ell,k)= k {n - \ell-1 \choose k-2}.
\]
Assume $k \ge 2$ so that $\ell \le n-1$. Let $P_k(\ell)$ be the proportion of partitions which contain {\em at least one fragment} of size $\ell$.
Then
\begin{align*}
P_k(\ell)=& \frac{N(\ell,k)}{N(k)}=\frac{N(\ell,k)}{{n-1 \choose k-1}}\\
=& k(k-1) \frac{(n-k) \dots (n-k-\ell+2)}{(n-1) \cdots (n-\ell+1)(n-\ell)}\\
=& \frac{k(k-1)}{n-\ell} \brac{1-\frac{k-1}{n-1}}\cdots \brac{1-\frac{k-1}{n-\ell+1}}.
\end{align*}
{\bf Case $k$ tends to infinity.} Suppose $k \ge \om$. For any value of $k$, the expected length of a fragment is $n/k$ so
\[
\Pr(\ell \ge \sqrt k \E \ell) \le \frac{1}{\sqrt k}.
\]
Assume $\ell \le \sqrt k \E \ell= n/\sqrt k$. We continue with the asymptotics of $P_k(\ell)$.
\begin{align*}
P_k(\ell) = &\;
\frac{k(k-1)}{n-\ell}
\; \exp\brac{ -(k-1) \brac{\frac{1}{n-1}+ \cdots+ \frac{1}{n-\ell+1}}} \\
&\; \times
\exp\brac{ -(k-1)O \brac{\frac{1}{n^2}+\cdots + \frac{1}{(n-\ell)^2}}}
\\
\sim &\; \frac{k(k-1)}{n-\ell}\; \exp \brac{-(k-1) \log \frac{n-1}{n-\ell}} \;
\exp\brac{-O(k \ell/n^2)}\\
=&\; \frac{k(k-1)}{n-\ell} \bfrac{n-\ell}{ n-1}^{k-1} e^{-O(\sqrt{k}/n)}
\;\;\sim  \;\; k^2 \frac{n-1}{(n-\ell)^2} \brac{1-\frac{\ell+1}{n-1}}^k\\
\sim&\;  \frac{k^2}{n} \exp\brac{-\frac{k(\ell+1)}{n-1}} e^{-O(k(\ell/n)^2)} \;\; \sim \;\;\frac{k^2}{n} \exp \brac{-\frac{k\ell}{n}}.
\end{align*}
Let
\[
P_k( \ell \ge \ell_0)= P_k(\ell_0 \le \ell < n/\sqrt k )+ P_k(\ell \ge n/\sqrt k)=\sum_{\ell=\ell_0}^{n/\sqrt{k}} P_k(\ell) +O(1/\sqrt k).
\]
Then
\[
P(\ell \ge \ell_0) \sim  \frac{k^2}{n} \sum_{\ell=\ell_0}^{n/\sqrt k} e^{-\ell \frac{k}{n}}\;\;
\sim \;\;e^{- \ell_0 k/n} \;\; \frac{k^2}{n}\; \frac{1-e^{-O(\sqrt k)}}{1-e^{-k/n}}\;\;
\sim \;\; k e^{- \ell_0 k/n}.
\]
Let $\ell_0= (n/k) (\log k+\om)$, where $\ell_0 \ll n/\sqrt k$ then
\begin{equation} \label{Pk1}
P_k(\ell \ge \frac{n}{k}(\log k+ \om)) \sim e^{-\om}+ O(1/\sqrt k).
\end{equation}
Thus segments of order $(n/k)(\log k)$ exists with constant probability provided $\om$ is constant. This gives the order of the maximum segment length.

{\bf Case: $k \ge 2$ finite or tending slowly to infinity.}
Returning to a previous expression
\[
P_k(\ell) \sim  \frac{k(k-1)}{n-\ell} \bfrac{n-\ell}{ n-1}^{k-1}.
\]
Put $\ell/n=x$ then returning to the expansion of $P_k(\ell)$, and assuming $k \ge 3$,
\begin{align*}
P_k(\ell \ge \ell_0) & \sim \frac{k(k-1)}{n-1} \sum_{\ell \ge \ell_0} \bfrac{n-\ell}{n-1}^{k-2}\\
\sim & \frac{k(k-1)}{n-1} \int_{\ell_0/n}^1 (1-x)^{k-2} dx = k (1-\ell_0/n)^{k-1}.
\end{align*}
This is similar to the previous case.
\end{proof}


\begin{thebibliography}{99}
\bibitem{Axelrod} R. Axelrod.  The dissemination of culture: A model with local convergence and global polarization. Journal of Conflict Resolution, 41(2), 203–226, (1997).

\bibitem{Easy}  P. Bennett and A. Dudek. A gentle introduction to the differential equation method and dynamic concentration. Discrete Mathematics, 345(12), (2022).



 \bibitem{FeI} W. Feller. An Introduction to Probability Theory and its Applications. Volume I.

\bibitem{Fetc}
A. Flache, M. Mäs, T. Feliciani, E. Chattoe-Brown, G. Deffuant, S. Huet and J. Lorenz.
Models of Social Influence: Towards the Next Frontiers. JASSS, 20(4) 2, (2017).  {\tt http://jasss.soc.surrey.ac.uk/20/4/2.html}

\bibitem{Free} D. A. Freedman. On tail probabilities for martingales.  Ann. Probability 3, pp 100–118 (1975).

\bibitem{Holst(1980)} L. Holst. On the lengths of the pieces of a stick broken at random.  J. Appl. Prob., 17,
pp 623-634, (1980).

\bibitem{Mouss} M. Moussaïd, J. E. Kämmer, P. P. Analytis and H. Neth. Social Influence and the Collective Dynamics of Opinion Formation. PLOS ONE 8(11): e78433. (2013)


\bibitem{Mucko}
A. Mukhopadhyay, R.R. Mazumdar and  R. Roy. Voter and Majority Dynamics with Biased and Stubborn Agents. J Stat Phys 181, 1239–1265 (2020).

\bibitem{Pyke(1965)} R. Pyke. Spacings, JRSS(B) 27:3, pp. 395-449 (1965).

\bibitem{Nico} R. Pymar and N. Rivera. On the stationary distribution of the noisy voter model. arXiv:2112.01478 (2021).

\bibitem{Yildiz} E. Yildiz, A. Ozdaglar, D. Acemoglu, A. Saberi, and A. Scaglione.
Binary opinion dynamics with stubborn agents. ACM Trans. Econ. Comput., 1(4):19:1–
19:30,  (2013).



\end{thebibliography}
\end{document}